\newcommand{\R}{\mathbb R}
\newcommand{\dfb}{\stackrel{\Delta}{=}}
\DeclareMathOperator{\image}{image}
\newtheorem{theorem}{Theorem}
\newtheorem{proposition}{Proposition}
\newtheorem{corollary}{Corollary}
\newtheorem{lemma}{Lemma}
\newcommand{\vertiii}[1]{{\left\vert\kern-0.25ex\left\vert\kern-0.25ex\left\vert #1 
    \right\vert\kern-0.25ex\right\vert\kern-0.25ex\right\vert}}
\def\pip{\ \rule[-.035in]{.031in}{.153in}}
\newcommand{\widebars}[1]{\pip\hspace{-.01in} #1 \hspace{-.02in}\pip}
\newcommand{\gensn}[1]{|#1|} 
\newcommand{\indsn}[1]{|#1|} 
\newcommand{\procsn}[1]{\widebars{#1}} 
\newcommand{\restnorm}[1]{\|#1\|_R} 
\DeclareMathOperator{\coe}{coe} 
\newcommand{\consubsp}{\mathcal{I}} 
\newcommand{\eqrowsum}{\mathcal{M}} 
\title{\LARGE \bf
Consensus Seminorms and their Applications}
\author{Ron Ofir, Ji Liu, A. Stephen Morse, and Brian D. O. Anderson
\thanks{*The work of the first three authors was supported in part by the Air Force Office of Scientific Research, under award numbers FA9550-23-1-0175 and FA9550-25-1-0223. The work of R. Ofir was partially supported by the Viterbi Fellowship, Technion. The work of J. Liu was supported in part by the National Science Foundation under grant 2230101.}
\thanks{R. Ofir and A. S. Morse  are with the Department of Electrical Engineering, Yale University, CT, USA ({\tt\small \{ron.ofir,as.morse\}@yale.edu}).
J. Liu is with the Department of Electrical and Computer Engineering at Stony Brook University, NY, USA ({\tt\small ji.liu@stonybrook.edu}).
B. D. O. Anderson is with the School of Engineering, Australian National University, Canberra, Australia ({\tt\small brian.anderson@anu.edu.au}).}
}
\begin{document}

\maketitle
\thispagestyle{empty}
\pagestyle{empty}

\begin{abstract}

Consensus is a well-studied problem in distributed sensing, computation and control, yet deriving useful and easily computable bounds on the rate of convergence to consensus remains a challenge. This paper discusses the use of seminorms for this goal. A previously suggested family of seminorms is revisited, and an error made in their original presentation is corrected, where it was claimed that the a certain seminorm is equal to the well-known coefficient of ergodicity. Next, a wider family of seminorms is introduced, and it is shown that contraction in any of these seminorms guarantees convergence at an exponential rate of infinite products of matrices, generalizing known results on stochastic matrices to the class of matrices whose row sums are all equal one. Finally, it is shown that such seminorms cannot be used to bound the rate of convergence of classes larger than the well-known class of scrambling matrices.

\end{abstract}

\section{INTRODUCTION}

Various problems in distributed control and computation can be formulated either as consensus problems~\cite{OlfatiSaber2007ConsensusAndCooperation} or as problems in which consensus plays a key role \cite{Nedic2009DistributedSubgradient,Mou2015DistributedLinSolve}. In all cases, a fundamental question is deciding whether or not a given sequence of neighbor sets and weights will lead to all agents reaching a consensus, and if so, determining at what rate will convergence occur. If the agents are known to reach consensus, then for any initial condition the states of the agents will be closer to consensus than they were at initial time after some number of steps. One approach to bounding the rate of convergence is to look sufficiently far ahead in time and then determine the rate using the geometric average over that horizon, see for example~\cite{Olshevsky2011ConvergenceSpeed} for the time-invariant case and~\cite{Ming2008ConsensusRate} for the time-varying case. 

A different approach is to look for a measure in which contraction occurs at every step of the process~\cite{Xiao2003FastLinearIterations, DePasquale2024seminorms, Ipsen2011Ergodicity, Liu2011DetGossip}. Since the states do not converge to the origin but rather to some finite asymptotic state which depends on the initial values, contraction is more naturally studied with respect to a seminorm rather than a norm. This approach was used in~\cite{Xiao2003FastLinearIterations} to bound and optimize the rate of convergence in a distributed averaging process, although the term ``seminorm'' was not mentioned explicitly. Reference~\cite{Ipsen2011Ergodicity} defines a larger family of seminorms and uses them to bound the rate of convergence in the contexts of nonhomogeneous Markov chains and rank-one updates to matrices.

To be useful for determining a bound on the rate of convergence, it is particularly important to find a seminorm which can be evaluated in a straightforward manner. Indeed,~\cite{Ipsen2011Ergodicity} provides such expressions for some of the seminorms defined there. A different family of easily computable seminorms was defined and studied in~\cite{Liu2011DetGossip}. More recently, the idea of \emph{inducing} a matrix seminorm by a vector seminorm was studied in~\cite{DePasquale2024seminorms}, and explicit formulas for the induced seminorms were derived in both discrete and continuous time contexts. The goal of this paper is to further pursue this approach of bounding the rate of reaching consensus using seminorms.

We begin by defining a general family of suitable seminorms which we call consensus seminorms. We show that these include the seminorms of~\cite{Liu2011DetGossip} as a special case, and then show that one of the explicit expressions derived in~\cite{Liu2011DetGossip} is unfortunately incorrect. Next, we discuss the idea of inducing a matrix seminorm using a different definition than the one in~\cite{DePasquale2024seminorms}, and provide an elementary proof showing that the induced seminorm of a certain vector seminorm is the well-known coefficient of ergodicity. Focusing back on general consensus seminorms, we show that contraction in any submultiplicative consensus seminorm implies convergence to consensus at an exponential rate. Finally, we show that there is no class of stochastic matrices larger than the class of scrambling matrices, where all matrices in the class are contracting in a single submultiplicative consensus seminorm.

\section{Consensus seminorms}\label{sec:consensus_seminorms}

Recall that a seminorm satisfies the same properties as a norm, see e.g. \cite{HornJohnson2012MatrixAnalysis}, with the exception that a seminorm being zero does not imply that its argument is necessarily zero. In the context of consensus problems, a seminorm being zero should be equivalent to the system reaching consensus, which is in turn equivalent to the matrix product~$M_i M_{i-1} \cdots M_1$ converging as $i \to \infty$ to a rank one matrix of the form~$\mathbf{1} c$ for some~$c \in \R^{1 \times n}$. Therefore, this paper considers seminorms that are zero for all matrices whose rows are all the same.

While it may seem natural to define a consensus seminorm over all of~$\R^{n \times m}$, it will be made clear in Section~\ref{sec:induced} in considering induced consensus seminorms why this is impossible to do in a consistent manner in general. Furthermore, while matrices related to consensus problems are typically stochastic, i.e., have only nonnegative entries and such that each row sums to 1, seminorms must be defined over a linear space and stochastic matrices are clearly not closed under addition or scalar multiplication. For these reasons, we define~$\eqrowsum^{n \times m}$ as the linear space of real~$n \times m$ matrices whose row sums are all equal, and consider seminorms over this space. To be exact, we call a seminorm~$\gensn{\cdot} : \eqrowsum^{n \times m} \to \R$ a~\emph{consensus seminorm} if~$\gensn{M} = 0$ if and only if $M = \mathbf{1} c$ for some~$c \in \R^{1 \times m}$, where~$\mathbf{1}$ denotes the vector all of whose entries equal 1. As with matrix norms, a useful property which some but not all consensus seminorms have is the \emph{submultiplicative property}, namely that
\begin{equation}
    \gensn{M_2 M_1} \le \gensn{M_2} \gensn{M_1}
\end{equation}
for any~$M_1,M_2 \in \eqrowsum^{n \times n}$.

Another useful property that \emph{all} consensus seminorms satisfy is \emph{shift invariance}. Fix~$M \in \eqrowsum^{n \times m}$ and~$c \in \R^{1 \times m}$, then by the triangle inequality and the fact that~$\gensn{\mathbf{1}c} = 0$,
\begin{align*}
    \gensn{M} = \gensn{M + \mathbf{1}c - \mathbf{1}c} \le \gensn{M + \mathbf{1}c} + \gensn{-\mathbf{1}c} = \gensn{M + \mathbf{1}c},
\end{align*}
and
\begin{equation*}
    \gensn{M + \mathbf{1}c} \le \gensn{M} + \gensn{\mathbf{1}c} = \gensn{M}.
\end{equation*}
We conclude that for any consensus seminorm, and any~$M \in \eqrowsum^{n \times m}$ and~$c \in \R^{1 \times m}$,
\begin{equation}\label{eq:shift_invariance}
    \gensn{M + \mathbf{1}c} = \gensn{M}.
\end{equation}

\section{Metric consensus seminorms}

In this section we discuss the \emph{metric consensus seminorm}~$\procsn{\cdot}_p : \R^{n \times m} \to \R, \; p\in[1,\infty],$ defined by\footnote{Unlike general consensus seminorms, the metric consensus seminorm can be defined over~$\R^{n \times m}$ rather than just~$\eqrowsum^{n \times m}$.}
\begin{equation}\label{eq:procsn_def}
    \procsn{M}_p = \min_{c \in \R^{1 \times m}} \|M - \mathbf{1}c\|_p,
\end{equation}
where~$\|M\|_p$ denotes the standard matrix norm induced by the $p$-norm. The name ``metric'' follows from the relation of this seminorm to the so-called~\emph{metric projection} of~$M$ to the nearest element in the space of rank one matrices whose rows are all the same.

This family of seminorms was first studied in~\cite{Liu2011DetGossip}. It is obvious that the metric consensus seminorm is indeed a consensus seminorm, that is~$\procsn{M}_p = 0$ if and only if~$M = \mathbf{1}c$ for some~$c \in \R^{1 \times m}$. Metric consensus seminorms are also submultiplicative\footnote{Although the proof of this property in \cite{Liu2011DetGossip} establishes~\eqref{eq:metric_submul} only for square matrices $M_1$ and $M_2$ with the row sums of $M_2$ all equal $1$, essentially the same proof applies to the more general case when the matrices are not square and the row sums of $M_2$ are equal, but not necessarily equal to $1$.}: for any~$M_1 \in \R^{m \times k}$ and~$M_2 \in \eqrowsum^{n \times m}$,
\begin{equation}\label{eq:metric_submul}
    \procsn{M_2 M_1}_p \le \procsn{M_2}_p \procsn{M_1}_p.
\end{equation}

Most importantly, metric consensus seminorms have easy to evaluate explicit expressions for the cases~$p=1,2$~\cite{Liu2011DetGossip}. Let~$\mathbf{n} \dfb \{1,2,\dots,n\}$.
\begin{proposition}
    Let~$M \in \R^{n \times n}$, and let~$q$ denote the unique integer quotient of~$m$ divided by 2. Then,
    \begin{equation}
        \procsn{M}_1 = \max_{j \in \mathbf{m}} \left\{ \sum_{i \in \mathcal{L}_j} m_{ij} - \sum_{i \in \mathcal{S}_j} m_{ij} \right\}
    \end{equation}
    where~$\mathcal{L}_j$ and~$\mathcal{S}_j$ denote the~$q$ largest and smallest entries of column~$j$, respectively.
\end{proposition}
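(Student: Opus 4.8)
The plan is to turn the definition $\procsn{M}_1 = \min_{c \in \R^{1\times m}} \|M - \mathbf{1}c\|_1$ into a family of independent scalar minimization problems, one for each column, and then to solve each of those explicitly as a best constant $\ell_1$-fit. Recall that the matrix norm induced by the vector $1$-norm is the largest absolute column sum. Since the $(i,j)$ entry of $M - \mathbf{1}c$ is $m_{ij} - c_j$,
\begin{equation*}
    \procsn{M}_1 = \min_{c \in \R^{1\times m}} \; \max_{j \in \mathbf{m}} \; \sum_{i=1}^{n} |m_{ij} - c_j|.
\end{equation*}

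First I would use the fact that $c_j$ affects only the $j$-th column in order to interchange the $\min$ and the $\max$. Setting $g_j \dfb \min_{c_j \in \R} \sum_{i=1}^n |m_{ij} - c_j|$, the inequality $\procsn{M}_1 \ge \max_j g_j$ is immediate, since for any $c$ and the index $j_0$ attaining $\max_j g_j$ we have $\max_j \sum_i |m_{ij}-c_j| \ge \sum_i |m_{ij_0}-c_{j_0}| \ge g_{j_0}$; conversely, choosing each $c_j$ to be an individual minimizer makes every column sum equal to its own $g_j$, so $\procsn{M}_1 \le \max_j g_j$. This is not an appeal to a saddle-point theorem but the elementary fact that $\min_x \max_j f_j(x_j) = \max_j \min_{x_j} f_j(x_j)$ when the data is coordinate-separable.

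It then remains to evaluate $g_j$, the classical problem of approximating a finite data set by a single constant in $\ell_1$, whose optimum is a median. Sorting the entries of column $j$ as $a_{(1)} \le \cdots \le a_{(n)}$ and writing $q = \lfloor n/2 \rfloor$, the triangle inequality gives, for each $k$, $|a_{(k)} - c| + |a_{(n+1-k)} - c| \ge a_{(n+1-k)} - a_{(k)}$, with equality exactly when $c \in [a_{(k)}, a_{(n+1-k)}]$. Summing over $k = 1, \dots, q$ — and, when $n$ is odd, retaining the remaining nonnegative term $|a_{(q+1)} - c|$ — yields the lower bound $\sum_i |m_{ij} - c| \ge \sum_{i \in \mathcal{L}_j} m_{ij} - \sum_{i \in \mathcal{S}_j} m_{ij}$, the middle order statistic (when $n$ is odd) belonging to neither $\mathcal{L}_j$ nor $\mathcal{S}_j$. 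The bound is attained simultaneously in all the paired inequalities because the intervals $[a_{(k)}, a_{(n+1-k)}]$ are nested with nonempty intersection $[a_{(q)}, a_{(q+1)}]$, on which the odd-$n$ leftover term also vanishes; hence $g_j = \sum_{i \in \mathcal{L}_j} m_{ij} - \sum_{i \in \mathcal{S}_j} m_{ij}$, and substituting into $\procsn{M}_1 = \max_j g_j$ gives the stated formula.

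I expect the only points needing care to be the $\min$–$\max$ interchange, which must be argued from separability rather than quoted, and a uniform treatment of the parity of $n$, so that the median characterization, the attainment of all paired bounds at a common $c$, and the bookkeeping of which order statistics enter $\mathcal{L}_j$ and $\mathcal{S}_j$ remain consistent in the even and odd cases. Everything else is routine manipulation of the triangle inequality.
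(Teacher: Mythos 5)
Your proof is correct. Note that the paper itself does not prove this proposition --- it is quoted from the earlier reference on metric consensus seminorms --- so there is no in-paper argument to compare against; your derivation stands as a self-contained justification. The two steps are exactly the right ones: the reduction of $\min_c\max_j\sum_i|m_{ij}-c_j|$ to per-column problems is legitimate because the objective is coordinate-separable (your two-sided argument, lower bound via the column attaining $\max_j g_j$ and upper bound via choosing each $c_j$ as its own minimizer, is complete), and the evaluation of each $g_j$ as a best constant $\ell_1$-fit via pairing order statistics is the standard median argument and gives precisely the ``sum of the $q$ largest minus sum of the $q$ smallest'' expression. One cosmetic imprecision: in the odd case the leftover term $|a_{(q+1)}-c|$ vanishes only at $c=a_{(q+1)}$, not on all of $[a_{(q)},a_{(q+1)}]$; since $a_{(q+1)}$ lies in every paired interval $[a_{(k)},a_{(n+1-k)}]$, attainment still holds, so this does not affect correctness. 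You also implicitly repair a typo in the statement by taking $q=\lfloor n/2\rfloor$ (the number of entries in a column) rather than the quotient of $m$ by $2$; for the square case stated the two coincide, and your reading is the intended one.
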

\begin{proposition}
    Let~$M \in \R^{n \times n}$, and let~$P = I - \frac{1}{n}\mathbf{1}\mathbf{1}'$ denote the orthogonal projection onto the image of~$\mathbf{1}\mathbf{1}'$. Then
    \begin{equation}
        \procsn{M}_2 = \|PM\|_2 = \mu\left\{ M' P' P M \right\}
    \end{equation}
    where~$\mu\{T\}$ denotes the largest eigenvalue of~$T$.

    Furthermore, if~$M$ is doubly stochastic then~$\procsn{M}_2 = \sigma_2(M)$ where~$\sigma_2$ denotes the second largest singular value of~$M$.
\end{proposition}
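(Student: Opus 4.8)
The plan is to establish the three identities in sequence. Throughout I would use only that $P=P'=P^2$ is an orthogonal projector with $\|P\|_2=1$ (for $n\ge 2$; the case $n=1$ is vacuous) and that $P\mathbf{1}=\mathbf{1}-\tfrac1n\mathbf{1}(\mathbf{1}'\mathbf{1})=0$.

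For $\procsn{M}_2=\|PM\|_2$, I would first fix an arbitrary $c\in\R^{1\times n}$ and note that $P(M-\mathbf{1}c)=PM$ because $P\mathbf{1}=0$; then by submultiplicativity of the spectral norm and $\|P\|_2=1$ we get $\|M-\mathbf{1}c\|_2\ge\|P(M-\mathbf{1}c)\|_2=\|PM\|_2$, hence $\procsn{M}_2\ge\|PM\|_2$. For the reverse inequality I would exhibit the particular choice $c=\tfrac1n\mathbf{1}'M$, for which $M-\mathbf{1}c=(I-\tfrac1n\mathbf{1}\mathbf{1}')M=PM$ and therefore $\|M-\mathbf{1}c\|_2=\|PM\|_2$; this also shows the minimum is attained. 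The second identity $\|PM\|_2=\mu\{M'P'PM\}$ — read with the customary square root, i.e.\ $\|PM\|_2^2=\lambda_{\max}\big((PM)'(PM)\big)$ — is just the variational characterization of the spectral norm applied to $PM$, together with $(PM)'(PM)=M'P'PM$.

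For the doubly stochastic case the key step I would isolate is that $P$ commutes with $M$: from $M\mathbf{1}=\mathbf{1}$ and $\mathbf{1}'M=\mathbf{1}'$ one gets $PM=M-\tfrac1n\mathbf{1}\mathbf{1}'=MP$, hence $PM=MP=PMP$; the same two relations show $M$ leaves the orthogonal decomposition $\R^n=\mathrm{span}(\mathbf{1})\oplus\mathbf{1}^\perp$ invariant. Passing to an orthonormal basis whose first vector is $\mathbf{1}/\sqrt n$, I would write $M$ as the block-diagonal matrix $1\oplus M_\perp$, where $M_\perp$ is the restriction of $M$ to $\mathbf{1}^\perp$, and observe that $PM=PMP$ then takes the form $0\oplus M_\perp$, so $\|PM\|_2=\|M_\perp\|_2$. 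Finally, using $\|M\|_2\le\sqrt{\|M\|_1\|M\|_\infty}=1$ on one hand and $M\mathbf{1}=\mathbf{1}$ on the other, I would conclude $\sigma_1(M)=\|M\|_2=1$; since the singular values of $M$ are those of $M_\perp$ together with the $1$ coming from the $\mathbf{1}$-block, those of $M_\perp$ are exactly $\sigma_2(M)\ge\cdots\ge\sigma_n(M)$, giving $\|PM\|_2=\sigma_2(M)$, and combining with the first identity, $\procsn{M}_2=\sigma_2(M)$.

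The main obstacle, and really the only point demanding care, is the doubly stochastic part: one must argue cleanly that the largest singular value of a doubly stochastic $M$ equals $1$ and is ``carried'' precisely by the $\mathbf{1}$ direction, so that projecting that direction out removes $\sigma_1$ exactly and leaves $\sigma_2(M)$ — without tacitly assuming $\sigma_1(M)$ is a simple singular value (it need not be, as permutation matrices show). The commutativity $PM=MP$ and the $M$-invariance of $\mathbf{1}^\perp$ are exactly what make the block-diagonalization, and hence this conclusion, rigorous. Steps~1 and~2 are routine once one notices that $P$ absorbs the shift, $P(M-\mathbf{1}c)=PM$.
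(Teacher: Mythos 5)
Your proof is correct and complete. Note that the paper itself states this proposition without proof, citing \cite{Liu2011DetGossip}, so there is no in-paper argument to compare against; your route --- absorbing the shift via $P\mathbf{1}=0$ to get the lower bound, attaining the minimum at $c=\tfrac1n\mathbf{1}'M$, and, in the doubly stochastic case, using $M\mathbf{1}=\mathbf{1}$ and $\mathbf{1}'M=\mathbf{1}'$ to block-diagonalize $M$ as $1\oplus M_\perp$ along $\operatorname{span}(\mathbf{1})\oplus\mathbf{1}^\perp$ so that the singular values of $M$ are exactly $\{1\}$ together with those of $M_\perp$ --- is the standard argument and is carried out rigorously. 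You are also right to read the displayed identity as $\|PM\|_2^2=\mu\{M'P'PM\}$ (the statement as printed omits the square root), and your block decomposition correctly handles the case where the singular value $1$ of a doubly stochastic $M$ is not simple, since the second largest element of the multiset $\{1\}\cup\sigma(M_\perp)$ is $\|M_\perp\|_2=\|PM\|_2$ regardless of multiplicity.
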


For~$p=\infty$ it was also claimed in~\cite{Liu2011DetGossip} that the seminorm evaluates to the well-known \emph{coefficient of ergodicity}, also known as the \emph{Dobrushin coefficient}, defined as
\begin{equation}\label{eq:coe}
    \coe(M) = \frac{1}{2} \max_{i,j} \sum_{k=1}^n |m_{ik} - m_{jk}|,
\end{equation}
where~$m_{ij}$ denotes the $ij$th entry of~$M \in \R^{n \times m}$. It is not difficult to show that for any stochastic matrix~$S$,~$\coe(S) \le 1$, and furthermore~$\coe(S) < 1$ if and only if no two rows of~$S$ are orthogonal~\cite{Seneta2006}. Because of their role in non-homogeneous Markov chains, stochastic matrices which have no two rows orthogonal to each other are also known as \emph{scrambling matrices}. It was recently discovered by us~\cite{DeterministicCorrection2025} that the metric consensus seminorm with~$p = \infty$ is \emph{not equal} to the coefficient of ergodicity, and in the sequel we provide a counterexample.

Verifying the counterexample requires the following necessary and sufficient condition, guaranteeing that~$\procsn{S}_\infty < 1$ in terms of the $n\times n$ matrix $[[S]]$ whose $ij$th entry is $1$ if $s_{ij}\neq 0$ and is $0$ otherwise.

\begin{proposition}\label{prop:procsn_infty_contract}
    Let~$S \in \R^{n \times n}$ be a stochastic matrix. Then,~$\procsn{S}_\infty < 1$ if and only if there exists a nonnegative vector~$y$ such that
    \begin{equation}\label{eq:proc_infinity_semicontract}
        (\mathbf{1}\mathbf{1}' - 2[[S]])y < \mathbf{0},
    \end{equation}
    where the inequality is entrywise and~$\mathbf{0} \in \R^n$ denotes the zero vector.
\end{proposition}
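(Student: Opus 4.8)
The plan is to rewrite each side of the claimed equivalence as an elementary statement about the rows of $S$, and then establish the two implications by explicit constructions.

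\emph{Reformulating both conditions.} Since the induced $\infty$-norm of a matrix is its largest absolute row sum, writing $s_i$ for the $i$th row of $S$ we have $\procsn{S}_\infty=\min_{c\in\R^{1\times n}}\max_i\|s_i-c\|_1$, so $\procsn{S}_\infty<1$ holds exactly when there is a row vector $c$ with $\|s_i-c\|_1<1$ for every $i$. For the other side, let $A_i=\{\,j:s_{ij}\neq 0\,\}$ be the support of row $i$; the $i$th entry of $(\mathbf{1}\mathbf{1}'-2[[S]])y$ equals $\sum_j y_j-2\sum_{j\in A_i}y_j$, so the existence of a nonnegative $y$ with $(\mathbf{1}\mathbf{1}'-2[[S]])y<\mathbf{0}$ is equivalent to the existence of a nonnegative vector $y$ (automatically nonzero, since $y=\mathbf{0}$ fails the strict inequality) satisfying $\sum_{j\in A_i}y_j>\sum_{j\notin A_i}y_j$ for every $i$; loosely, $y$ puts more than half of its mass under the support of each row.

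\emph{Sufficiency.} Given such a $y$, I would take $c=\varepsilon y$ with $\varepsilon>0$ small enough that $\varepsilon y_j<s_{ij}$ whenever $s_{ij}>0$ — possible since there are finitely many such constraints, each with a positive right-hand side, and $\max_j y_j>0$. Then, using $s_{ij}=0$ for $j\notin A_i$ and $\sum_{j\in A_i}s_{ij}=1$,
\[
\|s_i-c\|_1=\sum_{j\in A_i}(s_{ij}-\varepsilon y_j)+\sum_{j\notin A_i}\varepsilon y_j=1-\varepsilon\Big(\sum_{j\in A_i}y_j-\sum_{j\notin A_i}y_j\Big)<1
\]
for all $i$, whence $\procsn{S}_\infty\le\max_i\|s_i-c\|_1<1$.

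\emph{Necessity.} Now suppose $\|s_i-c\|_1<1$ for all $i$. Splitting the $\ell_1$ norm over $A_i$ and its complement and using the termwise inequality $|s_{ij}-c_j|\ge s_{ij}-c_j$ together with $\sum_{j\in A_i}s_{ij}=1$ gives
\[
1>\|s_i-c\|_1\ge\Big(1-\sum_{j\in A_i}c_j\Big)+\sum_{j\notin A_i}|c_j|,
\]
so $\sum_{j\notin A_i}|c_j|<\sum_{j\in A_i}c_j$. Writing $c_j=c_j^+-c_j^-$ with $c_j^+,c_j^-\ge 0$ the positive and negative parts (so $|c_j|=c_j^++c_j^-$), substituting and rearranging yields $\sum_{j\in A_i}c_j^+>\sum_j c_j^-+\sum_{j\notin A_i}c_j^+\ge\sum_{j\notin A_i}c_j^+$. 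Hence the vector $y$ with entries $y_j=c_j^+$ is nonnegative, is nonzero (the strict inequality with a nonnegative right-hand side forces $\sum_{j\in A_i}c_j^+>0$), and satisfies $\sum_{j\in A_i}y_j>\sum_{j\notin A_i}y_j$ for every $i$, which is precisely the reformulated right-hand condition; combined with the first paragraph this closes the equivalence.

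\emph{Where the difficulty is.} The only nonroutine move is choosing $y$ to be the positive part of a feasible centering vector $c$ in the necessity direction, and then checking that the strict inequalities survive; everything else is the largest-row-sum formula for $\|\cdot\|_\infty$, the trivial bound $|a|\ge a$, and bookkeeping. Stochasticity of $S$ enters only through ``row sums equal $1$'' and ``nonnegative entries'' (so that $A_i=\{j:s_{ij}>0\}$ and $s_{ij}=|s_{ij}|$ on $A_i$); nothing about the sizes of the individual entries is used.
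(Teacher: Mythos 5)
Your proposal is correct, and while its computational core coincides with the paper's (splitting the $\ell_1$ norm of each row over the support $\mathcal{P}_i$ and its complement and using $\sum_{j\in\mathcal{P}_i}s_{ij}=1$, then recognizing the resulting inequalities as the entrywise condition $(\mathbf{1}\mathbf{1}'-2[[S]])y<\mathbf{0}$), you reach that computation by a genuinely different route. The paper first argues via convexity of $c\mapsto\|S-\mathbf{1}c\|_\infty$ that $\procsn{S}_\infty<1$ holds iff $c=\mathbf{0}$ is not a local minimum, which lets it restrict attention to perturbations $\delta$ with $\|\delta\|_\infty$ no larger than the smallest nonzero entry of $S$; on such small $\delta$ the absolute values resolve exactly, yielding a single chain of equivalences, after which nonnegativity of $\delta$ is obtained by replacing $\delta_j$ with $|\delta_j|$. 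You instead prove the two implications separately and never invoke convexity or smallness of the minimizer: for sufficiency you scale the certificate $y$ by a small $\varepsilon$ (essentially the same small-perturbation idea, made explicit), but for necessity you work with an arbitrary feasible $c$, use the one-sided bound $|s_{ij}-c_j|\ge s_{ij}-c_j$, and extract the certificate as the positive part $y_j=c_j^{+}$, checking that the strict inequality survives the rearrangement. What the paper's convexity step buys is a compact if-and-only-if chain with exact equalities throughout; what your argument buys is a more elementary, self-contained proof that additionally exhibits an explicit certificate $y=c^{+}$ from any minimizing (or merely feasible) $c$, not just from a small nonnegative one. Both arguments are sound; your necessity direction is the only place requiring care, and your bookkeeping there is correct.
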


\begin{proof}
    By~\eqref{eq:procsn_def} and since~$S$ is stochastic,~$\procsn{S}_\infty \le \|S\|_\infty = 1$. Since~$\|S - \mathbf{1}c\|_\infty$ is convex in~$c$, we have that~$\procsn{S}_\infty < 1$ if and only if~$c'=\mathbf{0}$ is not a local minimum of~$\|S - \mathbf{1}c\|_\infty$. That is,~$\procsn{S}_\infty < 1$ if and only if for any for any positive constant $\epsilon$, no matter how small, there is a nonzero row vector $\delta\in\R^{1\times n}$ with $\|\delta\|_{\infty} \leq \epsilon$ such that $\|S-\mathbf{1}\delta\|_{\infty}<1$.
    Let $\epsilon$ be the smallest nonzero entry in~$S$ thereby ensuring that,
    \begin{equation*}
         s_{ij} - \delta _j \geq 0, \;\;\;\; j \in \mathcal{P}_i, \;\;\;\;\; i \in \mathbf{n}
    \end{equation*}
    where $\mathbf{n} = \{1,2,\ldots,n\}$, $s_{ij}$ is the $ij$th entry in $S$, $\delta _j$ is the $j$th entry in $\delta$, and $\mathcal{P}_i$ is the set of indices $j\in\mathbf{n}$ for which $s_{ij}$ is positive.
    Then for $i\in\mathbf{n}$,
    \begin{equation*}
        \sum_{j=1}^n|s_{ij} - \delta _j| = \sum_{j\in\mathcal{P}_i}(s_{ij}-\delta_j) + \sum_{j\in\bar{\mathcal{P}}_i}|\delta_j|,
    \end{equation*}
    where $\bar{\mathcal{P}}_i$ is the complement of $\mathcal{P}_i$ in $\mathbf{n}$.
    Since for $i\in\mathbf{n}$, $\sum_{j\in\mathcal{P}_i}  s_{ij}= \sum_{j\in\mathbf{n}} s_{ij}$ and
    $\sum_{j\in\mathbf{n}} s_{ij} = 1$, it follows that
    \begin{equation*}
        \sum_{j=1}^n|s_{ij} - \delta _j| = 1 -\sum_{j\in\mathcal{P}_i}\delta_j + \sum_{j\in\bar{\mathcal{P}}_i}|\delta_j| \;\;\;\; i\in\mathbf{n}
    \end{equation*}
    Hence, $\|S-\mathbf{1}\delta\|_{\infty}<1$ if and only $\sum_{j=1}^n|s_{ij} - \delta _j|<1,\;i\in\mathbf{n}$, or equivalently
    \begin{equation*}
        -\sum_{j\in\mathcal{P}_i}\delta_j +\sum_{j\in\bar{\mathcal{P}}_i}|\delta_j|\;<0,\;\;\;\;\;i\in\mathbf{n}
    \end{equation*}
    If the last inequality holds, then clearly
    \begin{equation*}
        -\sum_{j\in\mathcal{P}_i} |\delta_j| +\sum_{j\in\bar{\mathcal{P}}_i}|\delta_j|\;<0,
    \end{equation*}
    so we may assume that all entries of~$\delta$ are nonnegative. Finally, since the $ij$th entry of~$\mathbf{1}\mathbf{1}' - 2[[S]]$ is equal to $-1$ if and only if~$j \in \mathcal{P}_i$ and~$1$ otherwise, we conclude that~$\procsn{S}_\infty < 1$ if and only if 
    \begin{equation}\label{eq:cond_delta}
        (\mathbf{1}\mathbf{1}' - 2[[S]])\delta' < \mathbf{0}.
    \end{equation}
    Noting~\eqref{eq:cond_delta} holds for some~$\delta$ if and only if it also holds for~$a \delta$ with any~$a>0$ completes the proof.
\end{proof}

We now show that~$\procsn{S}_\infty \neq \coe(S)$ by means of a counterexample. Consider the stochastic matrix
\begin{equation}
    S = \frac{1}{3}\begin{bmatrix}
        1 & 0 & 0 & 1 & 0 & 1 \\
        1 & 1 & 1 & 0 & 0 & 0 \\
        0 & 0 & 1 & 0 & 1 & 1 \\
        0 & 1 & 0 & 1 & 0 & 1 \\
        1 & 0 & 0 & 1 & 1 & 0 \\
        0 & 1 & 0 & 0 & 1 & 1
    \end{bmatrix}.
\end{equation}
It is easy to verify that~$S$ is a scrambling matrix, so that $\coe(S)<1$. Suppose to obtain a contradiction that~$\procsn{S}_\infty < 1$, and note that $[[S]] = 3S$. Then, by Proposition~\ref{prop:procsn_infty_contract}, there exists a nonnegative vector~$y$ such that
\begin{equation}
    (\mathbf{1}\mathbf{1}' - 6S)y < \mathbf{0}.
\end{equation}
It then follows that for any nonnegative and nonzero~$x \in \R^n$,
\begin{equation}
    x' (\mathbf{1}\mathbf{1}' - 6S) y < 0.
\end{equation}
However, this cannot be true since~$x = \begin{bmatrix} 0 & 1 & 1 & 1 & 1 & 0 \end{bmatrix}'$ is such that~$x' (\mathbf{1}\mathbf{1}' - 6S)=\mathbf{0}$. We conclude that~$\procsn{S}_\infty = 1$ and thus that $\coe(S)<\procsn{S}_\infty$.

\section{Induced consensus seminorms}\label{sec:induced}
The idea of ``inducing'' a matrix norm on $\R^{n \times m}$ using  a vector norm on $\R^m$, suggests that a useful way to try to  define a matrix seminorm on $\R^{n \times m}$ might be to try to induce it with the vector seminorm $\procsn{\cdot}_p$ on $\R^m$.  
In other words, starting with the vector seminorm $\procsn{x}_p = \min_c \|x-\mathbf{1}c\|_p$, try to define, if possible, a continuous, nonnegative, scalar-valued function $\indsn{\cdot}_p:\R^{n \times m} \to \R$ so that for any matrix $M_{n \times m}$,
\begin{equation}\label{eq:desired_compat}
    \procsn{Mx}_p \leq \indsn{M}_p \procsn{x}_p, \;\; \forall x\in\R^m
\end{equation}
We claim that \textbf{\emph{no such function exists}}, whether it is a seminorm or not.
To understand why this is so consider first the case when $m=1$; then $x$ is a scalar and $M$ is an $n-$vector. Assume $x$ is nonzero. Even so, under these conditions $\procsn{x}_p = 0$, so \eqref{eq:desired_compat} cannot hold for any value of $\indsn{M}_p$
unless $\procsn{Mx}_p = 0$, which only occurs in the special case when all rows of $M$, i.e. all entries in its single column, are equal.
 
Consider next the case when $m>1$ and $x$ is a nonzero vector in the one-dimensional subspace $\consubsp \dfb \operatorname{span} \mathbf{1}$;
suppose that $M$ is any matrix for which $M\mathbf{1} \not\in \consubsp$. Then $\procsn{x}_p = 0$ and $\procsn{Mx}_p \neq 0$  so there is no way to define $\indsn{M}_p$ so that the inequality in \eqref{eq:desired_compat} holds for these values of $M$ and $x$.
 
There are two possible ways to deal with this issue. The first is to restrict the values of $x\in\R^m$ for which \eqref{eq:desired_compat} is required to hold to some subset $U$ whose intersection with $\consubsp$ is contained in the subspace
  $M^{-1}(\consubsp) \dfb \{x : Mx\in\consubsp\}$.
Note that if $U$ is any subspace $\mathcal{U}$ for which $\mathcal{U}\cap\mathcal{I} = 0$, then $U$ will have the required property.
Any subspace for which $\mathcal{U} \oplus \consubsp = \R^m$ would work; the orthogonal complement of $\consubsp$  is one such choice \cite{DePasquale2024seminorms}, but there is no compelling reason to choose $\mathcal{U}$ in this way.

A second and perhaps more natural way to deal with this issue is to restrict the definition of the induced seminorm to the subset $\eqrowsum^{n\times m}$ of $\R^{n\times m}$ consisting of those matrices $M$ whose row sums are all equal\footnote {Note that in the case when $m=n$, $\eqrowsum^{n \times n}$  contains all  $n\times n$ stochastic matrices.}.
In this case  $\procsn{Mx}_p=0$ whenever $\procsn{x}_p = 0$. 
For the special situation when $m=1$, both $\procsn{x}_p$ and $\procsn{Mx}_p$ are zero so in this case \eqref{eq:desired_compat} will hold for any value of $\indsn{M}_p$.
Suppose next that $m>1$. In this case it makes sense to define
\begin{equation}\label{eq:def_indsn}
    \indsn{M}_p = \max_{x\in\mathcal{X}} \procsn{Mx}_p, \;\; M\in\eqrowsum^{n\times m}
\end{equation}
where  $\mathcal{X} \dfb \{x:\procsn{x}_p=1,x\in\R^m\}$. To understand why \eqref{eq:def_indsn} implies \eqref{eq:desired_compat}, fix $x\in\R^m$.
If $x\in\consubsp$, then $\procsn{x}_p$ and $\procsn{Mx}_p$ are both zero and \eqref{eq:desired_compat} holds no matter what the value of $\indsn{M}_p$.
Meanwhile, if $x\not\in\consubsp$, then $\procsn{x}_p \neq 0$ so
there must be a number $r>0$ so that $r\procsn{x}_p = 1$ in which case
$rx\in\mathcal{X}$. Thus, appealing to \eqref{eq:def_indsn},  $\procsn{M(rx)}_p \leq \indsn{M}_p \procsn{rx}_p$  which implies that \eqref{eq:desired_compat} holds.
It is worth noting that for all $p$, $\indsn{M}_p \leq \procsn{M}_p,\;M\in\eqrowsum^{n\times m}$. This is because  $\indsn{M}_p = \procsn{Mx}_p$ for some $x$ satisfying $\procsn{x}_p = 1$ and because
the submultiplicative property  $\procsn{Mx}_p\leq \procsn{M}_p \procsn{x}_p$ holds for such $M$. Also note that it immediately follows from~\eqref{eq:desired_compat} and~\eqref{eq:def_indsn} that
\begin{equation}
    \indsn{M_2 M_1}_p \le \indsn{M_2}_p \indsn{M_1}_p
\end{equation}
for any~$M_2, M_1 \in \eqrowsum^{n \times n}$.

\subsection{Special Case: $p=\infty$}
In this subsection we will show that for $M\in\eqrowsum^{n\times n}$, the induced consensus seminorm $\indsn{M}_{\infty}$ is in fact equal to the coefficient of ergodicity of $M$. This is an immediate consequence of the following proposition.

\begin{proposition}\label{prop:indsninfty_coe}
    For $M\in\eqrowsum^{n\times n}$
    \begin{equation}\label{eq:indsninfty_coe}
        \indsn{M}_{\infty} = \frac{1}{2}\max_{i,j}\sum_{k=1}^n|m_{ik}-m_{jk}|
    \end{equation}
\end{proposition}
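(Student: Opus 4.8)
The plan is to evaluate the maximization in~\eqref{eq:def_indsn} directly and show it equals the right-hand side of~\eqref{eq:indsninfty_coe}. First I would unwind the definitions: by~\eqref{eq:def_indsn}, $\indsn{M}_\infty = \max_{x} \procsn{Mx}_\infty$ subject to $\procsn{x}_\infty = 1$, and by~\eqref{eq:procsn_def} applied to vectors, $\procsn{y}_\infty = \min_{c \in \R} \|y - \mathbf{1}c\|_\infty = \frac12\bigl(\max_i y_i - \min_i y_i\bigr)$, i.e.\ half the ``spread'' of the vector $y$. So the goal becomes: maximize $\frac12\bigl(\max_i (Mx)_i - \min_i (Mx)_i\bigr)$ over all $x$ whose spread is $2$ (equivalently, by shift invariance~\eqref{eq:injection_invariance} and homogeneity, over all $x$ with entries in $[-1,1]$, since adding a multiple of $\mathbf{1}$ to $x$ changes nothing and we may center $x$). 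Writing $m_i$ for the $i$th row of $M$, and using that $m_i \mathbf{1} = m_j \mathbf{1}$ because $M \in \eqrowsum^{n\times n}$, we get $(Mx)_i - (Mx)_j = (m_i - m_j)x$, a vector with zero row sum dotted into $x$.

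The key step is then the pointwise bound: for fixed indices $i,j$ and any $x$ with $\|x\|_\infty \le 1$,
\begin{equation*}
    (m_i - m_j)x = \sum_{k=1}^n (m_{ik} - m_{jk}) x_k \le \sum_{k=1}^n |m_{ik} - m_{jk}| = 2\coe_{ij}(M),
\end{equation*}
where $\coe_{ij}(M) \dfb \frac12\sum_k |m_{ik}-m_{jk}|$, with equality achieved by choosing $x_k = \operatorname{sign}(m_{ik} - m_{jk})$ (and $x_k$ arbitrary in $[-1,1]$, say $0$, where the difference vanishes). Since $\max_i(Mx)_i - \min_i(Mx)_i = \max_{i,j}\bigl((Mx)_i - (Mx)_j\bigr)$, taking the max over $i,j$ of the left side and over admissible $x$ gives $\indsn{M}_\infty \le \max_{i,j}\coe_{ij}(M)$; for the reverse inequality, pick the pair $(i^*,j^*)$ attaining the maximum and the corresponding sign vector $x^*$, check that $\procsn{x^*}_\infty = 1$ (its spread is at most $2$, and it is exactly $2$ unless $m_{i^*} = m_{j^*}$, a case one handles separately — if all rows are equal both sides are $0$), and conclude $\indsn{M}_\infty \ge \procsn{Mx^*}_\infty \ge \frac12\bigl((Mx^*)_{i^*} - (Mx^*)_{j^*}\bigr) = \coe_{i^*j^*}(M)$.

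The main obstacle is a bookkeeping subtlety rather than a deep difficulty: one must be careful that the maximizing $x^*$ actually lies in $\mathcal{X}$, i.e.\ that $\procsn{x^*}_\infty = 1$ and not something smaller, since $\procsn{\cdot}_\infty$ ignores shifts by $\mathbf{1}$ and a sign vector could in principle have spread $0$ (all $+1$'s) — but that happens only when $m_{i^*k} \ge m_{j^*k}$ for all $k$, which together with equal row sums forces $m_{i^*} = m_{j^*}$ and hence $\coe(M) = 0$, so the claim is trivial in that case; otherwise the sign vector genuinely takes both values $+1$ and $-1$ and has spread exactly $2$. A cleaner route that sidesteps this is to note that $\procsn{x}_\infty = 1$ iff $x$ can be written (after an irrelevant shift by $\mathbf1$) with all entries in $[-1,1]$ and at least one entry equal to $+1$ and one equal to $-1$; since $\procsn{Mx}_\infty$ is monotone under enlarging the feasible set, replacing ``$\procsn{x}_\infty = 1$'' by ``$\|x\|_\infty \le 1$'' in~\eqref{eq:def_indsn} does not change the maximum, and over that box the linear-programming argument above is immediate. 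I would present the box version, then remark on the equivalence.
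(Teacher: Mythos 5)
Your proposal is correct and takes essentially the same route as the paper's proof: the half-spread formula $\procsn{x}_\infty=\tfrac12(x_{\max}-x_{\min})$, shift-normalizing $x$ into $[-1,1]^n$ (the paper's set $\bar{\mathcal{X}}$), the pointwise bound using $|x_k|\le 1$ for the upper estimate, and the sign vector combined with the equal-row-sums observation (guaranteeing both a $+1$ and a $-1$ entry, with the all-rows-equal case treated separately) for the lower estimate. Your ``box'' reformulation is only a presentational variant of the paper's restriction to $\bar{\mathcal{X}}$, so no substantive difference.
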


In the proof below use will be made of the fact that for any positive integer $k$ and any vector $x\in\R^k$
\begin{equation}\label{eq:sninfty_maxmin}
    \procsn{x}_{\infty} = \frac{1}{2}(x_{\max}-x_{\min})
\end{equation}
where $x_{\rm min}$ and $x_{\rm max}$ are respectively, the smallest and largest entries in $x$.
This easily proved fact was noted in \cite{Liu2014InternalStabilityConsensus} where it was assumed that $x$ was a nonnegative vector.
However \eqref{eq:sninfty_maxmin} actually holds for all vectors, whether they are nonnegative or not.
This is a simple consequence of the aforementioned shift invariance property of the seminorm $\procsn{\cdot}_{\infty}$.

Before proceeding with the proof of Proposition \ref{prop:indsninfty_coe}, it is worth pointing out that there is a close connection between what we are discussing here and Theorem 3.1 of \cite{Seneta2006} which contains the inequality
\begin{equation}\label{eq:sen}
    (y_{\max}-y_{\min})\leq \coe(S)(x_{\max}-x_{\min})
\end{equation}
where $S$ is any $n\times n$ stochastic matrix, $\coe(\cdot)$ is as defined in \eqref{eq:coe}, and $y_{\max}$ and $y_{\min}$ are the largest and smallest entries of the vector $y\dfb Sx$ respectively. In view of \eqref{eq:sninfty_maxmin}
it is clear that \eqref{eq:sen} is essentially the same as the inequality
$\procsn{Sx}_{\infty} \leq \indsn{S}_{\infty}\procsn{x}_{\infty}$.
It is interesting that \cite{Seneta2006} derives \eqref{eq:sen} without any reference to the seminorm  $\procsn{\cdot}_{\infty}$ or the induced seminorm $\indsn{M}_{\infty}$.

\textit{Proof of Proposition \ref{prop:indsninfty_coe}:}
Fix $M\in\eqrowsum^{n\times n}$ and suppose that $\indsn{M}_{\infty}$ is defined by \eqref{eq:def_indsn}. Since the row sums of $M$ are all equal, it must be true that $M\mathbf{1} = \sigma\mathbf{1}$ where $\sigma$ is the row sum of each row of $M$.
Let $\mathcal{\bar{X}}$ denote the set of vectors in $\R^n$ such that $x_{\max} = 1$ and $x_{\min}=-1$.
In the light  of \eqref{eq:sninfty_maxmin}
and the definition of $\mathcal{X}$ it is clear that $\bar{\mathcal{X}}$ is that subset of $\mathcal{X}$ consisting of vectors $x$ for which $x_{\max} = 1$. We claim that restricting  the admissible values of $x$ to $\bar{\mathcal{X}}$ rather than $\mathcal{X}$ does not alter the value of $\indsn{M}_{\infty}$ in \eqref{eq:def_indsn}. In other words
\begin{equation}\label{eq:altdef_indsn}
    \indsn{M}_{\infty} = \max_{x\in\bar{\mathcal{X}}}|Mx|_{\infty}
\end{equation}
To understand why this is so, suppose $x$ is any vector in $\mathcal{X}$ and define $\bar{x} = x+r\mathbf{1}$ where $r=1-x_{\max}$. Since $\procsn{x}_{\infty} = 1$, $\bar{x}\in \bar{\mathcal{X}}$.
Moreover, since $M\in\eqrowsum^{n \times n}$, $M\bar{x} = Mx +\sigma r \mathbf{1}$ where $\sigma$ is the sum of the entries in any one row of $M$. In light of the shift invariance property of $\procsn{\cdot}_{\infty}$, it follows that $\procsn{\bar{x}}_{\infty} = \procsn{x}_{\infty}$, that $\procsn{M\bar{x}}_{\infty} = \procsn{Mx}_{\infty}$, and consequently that \eqref{eq:altdef_indsn} is justified.

In view of \eqref{eq:altdef_indsn} there must be a vector $y\in\bar{\mathcal{X}}$ such that $\indsn{M}_{\infty} = \procsn{My}_{\infty}$.
Suppose that $q$ and $s$ are such that $\sum_k m_{qk}y_k$ and $\sum_k m_{sk}y_k$ are the largest and smallest entries in $My$ respectively. In view of \eqref{eq:def_indsn}
\begin{equation*}
    \indsn{M}_{\infty} = \frac{1}{2}\sum_k (m_{qk}-m_{sk}) y_k
\end{equation*}
But $y\in\bar{\mathcal{X}}$ which implies that $|y_k| \leq 1, k\in\mathbf{n}$.
Thus $(m_{qk} - m_{sk})y_k\leq |m_{qk} - m_{sk}|,\;k\in\mathbf{n}$, so $\indsn{M}_{\infty} \leq  \frac{1}{2}\sum_k|m_{qk}-m_{sk}|$. Therefore
\begin{equation}\label{aid}
    \indsn{M}_{\infty}\leq \frac{1}{2}\max_{i,j}\sum_k|m_{ik}-m_{jk}|
\end{equation}

Suppose that $u$ and $v$ are row indices such that
\begin{equation}\label{bum}
    \sum_k|m_{uk}-m_{vk}| = \max_{i,j}\sum_k|m_{ik}-m_{jk}|
\end{equation}
If $\sum_k|m_{uk}-m_{vk}| = 0$, then $\indsn{M}_{\infty}\leq 0$ because of \eqref{aid} and \eqref{bum}; therefore $\indsn{M}_{\infty} = 0$ which implies that \eqref{eq:indsninfty_coe} holds in this case.

Now suppose that $\sum_k|m_{uk}-m_{vk}| > 0$ which means that rows $u$ and $v$ are not equal.
On the other hand,
by assumption all row sums of $M$ are the same so $\sum_k(m_{uk}-m_{vk}) = 0$. This means that in the set of $n$ numbers, $\{m_{uk}-m_{vk}: k\in\mathbf{n}\}$ there must be at least one positive number and one negative number. Hence if $z$ is defined so that $z_k = \operatorname{sign} (m_{uk}-m_{vk}) $ when $(m_{uk}-m_{vk})\neq 0$ and $z_k = 0$ otherwise,
then $z\in\bar{\mathcal{X}}$
and $\sum_k(m_{uk}-m_{vk})z_k =\sum_k|m_{uk}-m_{vk}| $.
But $\sum_k(m_{uk}-m_{vk})z_k$ is the difference between $u$th and  $v$th entries of $Mz$. From this and \eqref{eq:def_indsn}
it follows that
$\procsn{Mz}_{\infty}\geq \frac{1}{2}\sum_k(m_{uk}-m_{vk})z_k$. Therefore $\procsn{Mz}_{\infty}\geq \frac{1}{2}\sum_k|m_{uk}-m_{vk}|$.
Since $z\in\bar{\mathcal{X}}$, it must be true that
$\indsn{M}_{\infty} \geq \frac{1}{2}\sum_k|m_{uk}-m_{vk}| $. Therefore $\indsn{M}_{\infty}\geq \frac{1}{2}\max_{i,j}\sum_k|m_{ik}-m_{jk}|$
because of \eqref{bum}. From this and \eqref{aid} it follows that \eqref{eq:indsninfty_coe} is true.
\hspace*{\fill}~\QED

\subsection{Contraction in induced consensus seminorms}

This section shows that contraction in the induced consensus seminorm defined above of each matrix in a product implies convergence of the product to a rank one matrix. A matrix~$M \in \eqrowsum^{n \times n}$ is called contractive in the induced seminorm~$\indsn{\cdot}_p$ if~$\indsn{M}_p < 1$.
\begin{theorem}\label{thm:induced_stochastic_contract}
    Let $p$ be fixed and let $\mathcal{C}$ be a compact subset of stochastic matrices which are all contractive in the induced seminorm $\indsn{\cdot}_p$ on $\eqrowsum^{n\times n}$. Let
    \begin{equation}\label{eq:contract}
        \lambda \dfb \max_{\mathcal{C}} \indsn{M}_p.
    \end{equation}
    Then for each infinite sequence of matrices $M_1,M_2,\ldots $ in $\mathcal{C}$, the matrix product $M_i M_{i-1} \cdots M_1$ converges as $i\to\infty$ to a rank one matrix of the form $\mathbf{1}c'$ as fast as $\lambda^i$ converges to zero.
\end{theorem}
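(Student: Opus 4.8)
The plan is to reduce the matrix product to its individual columns, apply the defining contraction property \eqref{eq:desired_compat} of the induced seminorm column by column, and then turn the resulting geometric decay of seminorms into honest entrywise convergence by exploiting the monotonicity that stochastic matrices force on the extreme entries of a vector.

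First I would record that $\lambda<1$. Since $\indsn{\cdot}_p$ is a seminorm on the finite-dimensional space $\eqrowsum^{n\times n}$ it is continuous, so by compactness of $\mathcal{C}$ the maximum in \eqref{eq:contract} is attained and is strictly less than $1$. Write $P_i \dfb M_i M_{i-1}\cdots M_1$; a product of stochastic matrices is stochastic, so each $P_i$ lies in $\eqrowsum^{n\times n}$, with all entries in $[0,1]$. Fix a standard basis vector $e_j\in\R^n$. Because $P_i e_j = M_i(P_{i-1}e_j)$ and each $M_k\in\eqrowsum^{n\times n}$, repeated use of \eqref{eq:desired_compat} together with $\indsn{M_k}_p\le\lambda$ gives
\begin{equation*}
    \procsn{P_i e_j}_p \le \indsn{M_i}_p\procsn{P_{i-1}e_j}_p \le \cdots \le \Big(\prod_{k=1}^i \indsn{M_k}_p\Big)\procsn{e_j}_p \le \lambda^i\procsn{e_j}_p .
\end{equation*}
Thus every column of $P_i$ is driven toward a multiple of $\mathbf{1}$ at the geometric rate $\lambda^i$, with a constant independent of the chosen sequence.

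The second step upgrades this column-wise seminorm bound to convergence with a rate. Since $\|\cdot\|_\infty\le\|\cdot\|_p$ for vectors, one has $\procsn{\cdot}_\infty\le\procsn{\cdot}_p$, and by \eqref{eq:sninfty_maxmin} the quantity $\procsn{P_i e_j}_\infty$ is exactly half the spread $b_i^{(j)}-a_i^{(j)}$ between the largest and smallest entries of the $j$th column of $P_i$; hence $b_i^{(j)}-a_i^{(j)}\le 2\lambda^i\procsn{e_j}_p$. Because $P_{i+1}e_j = M_{i+1}(P_i e_j)$ and $M_{i+1}$ is stochastic, each entry of $P_{i+1}e_j$ is a convex combination of the entries of $P_i e_j$, so $a_i^{(j)}$ is nondecreasing and $b_i^{(j)}$ is nonincreasing; both are bounded, hence converge, and since their difference tends to $0$ they share a common limit $c_{*j}$ with $a_i^{(j)}\le c_{*j}\le b_i^{(j)}$ for all $i$. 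Consequently $|[P_i e_j]_k - c_{*j}|\le b_i^{(j)}-a_i^{(j)}\le 2\lambda^i\procsn{e_j}_p$ for every row index $k$. Collecting the scalars $c_{*j}$ into $c_*\in\R^n$, the matrix $P_i-\mathbf{1}c_*'$ has all entries bounded in modulus by $2\lambda^i\max_j\procsn{e_j}_p$, so by equivalence of matrix norms on $\R^{n\times n}$ there is a constant $C$, depending only on $n$ and $p$, with $\|P_i-\mathbf{1}c_*'\|_p\le C\lambda^i$. Since $\lambda<1$ this is precisely the asserted convergence at rate $\lambda^i$.

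The main obstacle is the tempting but invalid shortcut of iterating submultiplicativity of $\procsn{\cdot}_p$ directly on the product $P_i$: the hypothesis only controls the \emph{induced} seminorm $\indsn{\cdot}_p$, and $\indsn{M}_p$ may be strictly smaller than $\procsn{M}_p$ — indeed $\procsn{M}_p$ can exceed $1$ even for a contractive $M$ — so the products $\prod_k \procsn{M_k}_p$ need not decay. Passing to columns and invoking the \emph{vector} inequality \eqref{eq:desired_compat}, whose right-hand side carries $\indsn{M}_p$ rather than $\procsn{M}_p$, is exactly what produces a geometric bound. A secondary point that needs care is that seminorm decay alone only shows that $P_i$ approaches the set of rank-one equal-row matrices; it is the monotonicity of the column extremes, a feature special to stochastic matrices, that both pins down a genuine limit $\mathbf{1}c_*'$ and simultaneously delivers the rate (an alternative closing argument is to show $\{P_i\}$ is Cauchy via $P_{i+k}-P_i=(Q-I)(P_i-\mathbf{1}c_i')$ with $Q$ stochastic, but the extreme-entry argument is more direct).
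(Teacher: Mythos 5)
Your proof is correct, and its second half takes a genuinely different route from the paper. Both arguments open the same way: reduce to vectors (the paper applies the product to an arbitrary fixed $d$, you to the columns $e_j$) and iterate the compatibility bound $\procsn{Mx}_p \le \indsn{M}_p \procsn{x}_p$ to get $\procsn{P_i e_j}_p \le \lambda^i \procsn{e_j}_p$ --- and you are right that this, not submultiplicativity of $\procsn{\cdot}_p$ applied with $\procsn{M_k}_p$, is the step that matters. Where you diverge is in upgrading seminorm decay to actual convergence with a limit. The paper takes $r_i$ to be the minimizer of $\|x_i - r\mathbf{1}\|_p$, derives the identity $e_{i+j} = M_{i+j}\cdots M_{i+1}e_i - \mathbf{1}(r_{i+j}-r_i)$ with $e_i = x_i - \mathbf{1}r_i$, and shows $r_i$ is Cauchy using the uniform bound $\|M_{i+j}\cdots M_{i+1}\|_p \le \mu$ that stochasticity provides through $\|\cdot\|_\infty$ and norm equivalence. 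You instead pass to $\procsn{\cdot}_\infty$ via $\procsn{x}_\infty \le \procsn{x}_p$ and the spread formula \eqref{eq:sninfty_maxmin}, and exploit the fact that multiplication by a stochastic matrix makes each new entry a convex combination of the old ones, so the columnwise minima are nondecreasing and maxima nonincreasing; this pins down the limit $c_{*j}$ constructively and delivers the entrywise rate $2\lambda^i\procsn{e_j}_p$ in one stroke, with no Cauchy argument on an optimal shift. The trade-off: your monotone-extremes step genuinely needs nonnegativity of the $M_k$, whereas the paper's Cauchy argument uses stochasticity only through $M\mathbf{1}=\mathbf{1}$ and a uniform norm bound on the partial products, which is closer in spirit to how the paper later extends the result (Theorem~\ref{thm:contract_gensn}) to matrices with row sums one that need not be nonnegative; on the other hand your closing argument is more elementary and yields the limit and the rate simultaneously, and it even fixes a small blemish in the paper's proof, where the displayed equality for $\|\mathbf{1}(r_{i+j}-r_i)\|_p$ should be an inequality from the triangle inequality.
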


It is worth noting that although it is fairly obvious from the hypothesis of the preceding theorem that  $|M_iM_{i-1}\cdots M_1|_p$ converges to $0$ as $i\longrightarrow\infty$, it is not so obvious that the matrix products $M_iM_{i-1}\cdots M_1$ converges to a limit as well. This in fact is why there is a need  for the preceding theorem and its proof.

\begin{proof}
Let $M_1,M_2,\ldots$ be a sequence of contractive matrices in $\mathcal{C}$ with the aforementioned properties.
To prove the theorem, it is enough to show that with $d$ an arbitrary but fixed nonzero vector,  the sequence of vectors
\begin{equation}\label{eq:seq}
    x_i = M_i M_{i-1} \cdots M_1d, \;\; i\geq 1
\end{equation}
converges to a point in the subspace spanned by the vector $\mathbf{1}$.
Towards this end, for each $i\geq 1$  let $r_i$ denote that real number $r$  which minimizes $\|x_i-r\mathbf{1}\|_p$.

From the submultiplicativity of $\indsn{\cdot}_p$ and \eqref{eq:seq}, it follows that $\procsn{x_i}_p \leq \indsn{M_i}_p\indsn{M_{i-1}}_p\cdots\indsn{M_1}_p\procsn{d}_p,i\geq 1$. This and \eqref{eq:contract} thus imply that $\procsn{x_i}_p\leq \lambda^i\procsn{d}_p,\;i\geq 1$. Therefore $\procsn{x_i}_p$ converges to zero as $i\to \infty$ as fast as $\lambda^i$ does.
Since $\procsn{x_i}_p = \|x_i - r_i\mathbf {1}\|_p$, the sequence $x_i - r_i\mathbf {1},\;\;i\geq 1$ must also converge to zero as fast as $\lambda^i$ does. To complete the proof it is enough to show that the sequence $r_i,\;i\geq 1$ converges to a finite limit $\bar{r}$ as fast as $\lambda^i$ does.

Let 
\begin{equation}\label{d1}
    e_i = x_i-\mathbf{1}r_i, \; i \geq 1
\end{equation}
then
\begin{equation}\label{invar}
    \|e_i\|_p = \procsn{x_i}_p, \;\;\; i \geq 1
\end{equation}
because of the definition of $\procsn{\cdot}_p$.
In view of \eqref{d1}, $e_{i+1} = x_{i+1} -\mathbf{1}r_{i+1}$.  From this and \eqref{eq:seq} it follows that $e_{i+1} = M_{i+1}x_i  -\mathbf{1}r_{i+1}$. Using \eqref{d1} again, there follows
  $e_{i+1} = M_{i+1}(e_i+\mathbf{1}r_i)  -\mathbf{1}r_{i+1}$. Therefore,
$$e_{i+1} = M_{i+1}e_i-\mathbf{1}(r_{i+1}-r_i),\;\;\;i\geq 1$$
because $M_{i+1}\mathbf{1} = \mathbf{1}$. By similar reasoning, it follows that
\begin{equation}
    e_{i+j} = M_{i+j}M_{i+j-1}\cdots M_{i+1}e_i-\mathbf{1}(r_{i+j}-r_i),\;\;\;i,j\geq 1\label{pp}
\end{equation}
Thus $\|\mathbf{1}(r_{i+j}-r_i)\|_p = \|M_{i+j}M_{i+j-1}\cdots M_{i+1}e_i-e_{i+j}\|_p$ so by the triangle inequality,
$$||\mathbf{1}(r_{i+j}-r_i)||_p \le ||M_{i+j}M_{i+j-1}\cdots M_{i+1}e_i||_p +||e_{i+j}||_p$$

Since both $\|\cdot\|_p$ and $\|\cdot\|_{\infty} $ are matrix norms on~$\R^{n\times n}$ there is a finite positive constant $\mu$  depending only on~$p$ and $n$ such that $\|Q\|_p\leq \mu\|Q\|_{\infty},\;\;Q\in\R^{n\times n}$. This and the fact that
$M_{i+j}M_{i+j-1}\cdots M_{i+1}$ is stochastic for all $i,j\geq 1$ imply that
$\|M_{i+j}M_{i+j-1}\cdots M_{i+1}\|_p\leq \mu\|M_{i+j}M_{i+j-1}\cdots M_{i+1}\|_{\infty}  = \mu$.

Therefore
\begin{align*}
  \|\mathbf{1}(r_{i+j}-r_i)\|_p &\leq \mu\|e_i\|_p + \|e_{i+j}\|_p \\
      &\le (\mu + \lambda^j)\lambda^i \\
      &\le (\mu + 1) \lambda^i,
\end{align*}
where the second inequality follows from~\eqref{invar}. Then, the sequence~$r_i, i \ge 1,$ is a Cauchy sequence and so it converges to a finite limit. Taking~$j \to \infty$ implies that convergence occurs as fast as~$\lambda^i$ converges to zero.
\end{proof}

Note that the only properties of the induced consensus seminorm that are used in the proof are that it is submultiplicative and that
\begin{equation}\label{eq:compat}
    \procsn{Mx}_p \le \indsn{M}_p \procsn{x}_p
\end{equation}
for any~$x \in \R^n$ and~$M \in \eqrowsum^{n \times n}$. However, the metric consensus seminorm satisfies both of the properties too, i.e., it is submultiplicative and~\eqref{eq:compat} still holds for any~$x\in\R^n$ and any~$M \in \eqrowsum^{n \times n}$ if~$\indsn{M}_p$ is replaced by~$\procsn{M}_p$. This implies the following result.
\begin{corollary}
    Theorem~\ref{thm:induced_stochastic_contract} is also true if the metric consensus seminorm $\procsn{\cdot }_p$  is used instead of the induced consensus seminorm $\indsn{\cdot}_p$.
\label{ron}\end{corollary}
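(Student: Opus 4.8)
The corollary follows immediately once we observe that the proof of Theorem~\ref{thm:induced_stochastic_contract} never uses any property of $\indsn{\cdot}_p$ beyond the two facts recorded in the remark preceding the statement: submultiplicativity, and the compatibility inequality $\procsn{Mx}_p \le \indsn{M}_p \procsn{x}_p$ for all $x \in \R^n$ and $M \in \eqrowsum^{n\times n}$. So the plan is simply to check that $\procsn{\cdot}_p$ itself satisfies these two properties on $\eqrowsum^{n\times n}$, and then to run the same argument verbatim with $\indsn{M_k}_p$ replaced by $\procsn{M_k}_p$ throughout.

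First I would recall that submultiplicativity of $\procsn{\cdot}_p$ on square matrices with equal row sums was already noted in Section~III (citing \cite{Liu2011DetGossip}); in particular $\procsn{M_2 M_1}_p \le \procsn{M_2}_p\procsn{M_1}_p$ for $M_1, M_2 \in \eqrowsum^{n\times n}$. Second, I would verify the vector compatibility inequality: for $M \in \eqrowsum^{n\times n}$ and $x \in \R^n$, view $x$ as an $n \times 1$ matrix. Its row sums are trivially all equal (to the scalar $x$ itself is not quite the phrasing — rather, an $n\times 1$ matrix is in $\eqrowsum^{n\times 1}$ automatically since each row sum is just that row's single entry, and there is no constraint among distinct rows because $\eqrowsum$ only requires equality of row sums when $m>1$; but what we actually need is the submultiplicative bound $\procsn{Mx}_p \le \procsn{M}_p \procsn{x}_p$, which is precisely the case $k=1$ of the submultiplicativity statement quoted in the footnote of Section~III, valid whenever the row sums of $M$ are equal). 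Thus \eqref{eq:compat} holds with $\indsn{M}_p$ replaced by $\procsn{M}_p$.

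With these two observations in hand, the proof of Theorem~\ref{thm:induced_stochastic_contract} transfers without change: defining $\lambda \dfb \max_{\mathcal{C}} \procsn{M}_p < 1$ (the maximum exists and is strictly less than one by compactness of $\mathcal{C}$ and continuity of $\procsn{\cdot}_p$, together with the hypothesis $\procsn{M}_p<1$ for each $M\in\mathcal{C}$), one obtains $\procsn{x_i}_p \le \lambda^i \procsn{d}_p$ from submultiplicativity, then the identity $e_{i+j} = M_{i+j}\cdots M_{i+1} e_i - \mathbf{1}(r_{i+j}-r_i)$ exactly as before (using $M_k \mathbf{1} = \mathbf{1}$ for stochastic $M_k$), then the norm-equivalence bound $\|M_{i+j}\cdots M_{i+1}\|_p \le \mu$ using stochasticity, and finally the Cauchy estimate $\|\mathbf{1}(r_{i+j}-r_i)\|_p \le (\mu+1)\lambda^i$, yielding convergence of $r_i$ to a finite limit $\bar r$ and hence convergence of $M_i \cdots M_1 d$ to $\bar r \mathbf{1}$ at rate $\lambda^i$.

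There is essentially no obstacle here — the only point requiring a moment's care is making sure the compatibility inequality \eqref{eq:compat} is legitimately available for $\procsn{\cdot}_p$ with a vector argument $x \in \R^n$, i.e. confirming that the $k=1$ instance of the submultiplicative property from \cite{Liu2011DetGossip} (as extended in the Section~III footnote to non-square matrices with $M_2$ having equal row sums) indeed covers $\procsn{Mx}_p \le \procsn{M}_p\procsn{x}_p$. Once that is acknowledged, the corollary is a one-line consequence of the remark already made in the text, and a full separate proof need not be written out.
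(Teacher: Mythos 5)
Your proposal is correct and follows essentially the same route as the paper: the paper's own justification is precisely the remark that the proof of Theorem~\ref{thm:induced_stochastic_contract} uses only submultiplicativity and the compatibility inequality~\eqref{eq:compat}, both of which the metric consensus seminorm satisfies (the latter being the $k=1$ case of the submultiplicative property for non-square matrices noted in Section~III). Your extra care in checking that the vector case of~\eqref{eq:compat} is legitimately covered is a fine touch but does not change the argument.
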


\section{Contraction in general consensus seminorms}

Theorem~\ref{thm:induced_stochastic_contract} and Corollary \ref{ron} establish that contraction of all matrices in the sequence implies convergence of the matrix product to a rank one matrix at an exponential rate, under the assumption that all matrices in the sequence are stochastic matrices and ``contraction'' is with respect to either the induced or metric consensus seminorms. However,~\cite[Proposition 3]{Liu2011DetGossip} proves that if contraction is with respect to the metric consensus seminorms specifically, then exponential convergence to a rank one matrix occurs even if the matrices in the sequence are not stochastic, but rather just have row sums all equal to one. This suggests the obvious question of whether~\cite[Proposition 3]{Liu2011DetGossip} can be generalized to deal with a wider class of seminorms, and whether Theorem~\ref{thm:induced_stochastic_contract} can be generalized to deal with a wider class of matrices. This section shows that these results indeed generalize to all submultiplicative consensus seminorms. 

From here on, $\gensn{M}$ denotes a general matrix consensus seminorm as defined in Section~\ref{sec:consensus_seminorms} rather than the induced seminorm of Section~\ref{sec:induced} specifically. The proof of convergence will be stated for a specific convenient consensus seminorm, and the general result will then follow from the fact that all consensus seminorms are equivalent in the same manner that all matrix norms are equivalent, a fact which is proved next. Rather than studying these seminorms directly,~\eqref{eq:shift_invariance} will be used to identify any consensus seminorm~$\gensn{\cdot}: \R^{n \times n} \to \R$ with a matrix norm~$\|\cdot\| : \R^{(n-1) \times (n-1)} \to \R$. Let~$Q$ be a full-rank matrix with $n$ columns such that~$\ker Q = \consubsp$, and note that the rank-nullity theorem implies that~$Q \in \R^{(n-1) \times n}$. For any~$A \in \eqrowsum^{n \times n}$ there exists a unique~$\bar A \in \R^{(n-1) \times (n-1)}$ such that
\begin{equation}\label{eq:Aproj}
    QA = \bar A Q.
\end{equation}
Given any~$\bar A \in \R^{(n-1) \times (n-1)}$, it is also true that there is a matrix~$A \in \eqrowsum^{n \times n}$ such that~\eqref{eq:Aproj} holds. This follows from the fact that~$Q$ has full row rank, so it has a right inverse~$Q^{-1} \in \R^{n \times (n-1)}$ such that~$QQ^{-1} = I$, and taking~$A = Q^{-1} \bar A Q$ gives the required equality. In this case there is not a unique~$A \in \eqrowsum^{n \times n}$ for which~\eqref{eq:Aproj} holds, but if~$A_1,A_2 \in \eqrowsum^{n \times n}$ are such that
\begin{equation}
    QA_1 = QA_2 = \bar A Q,
\end{equation}
for some given~$\bar A$, then~$\image (A_1 - A_2) = \ker Q = \consubsp$. Since~$A_1 - A_2 \in \eqrowsum^{n \times n}$, and since any rank one matrix in~$\eqrowsum^{n \times n}$ must be of the form~$\mathbf{1}c$ for some~$c \in \R^{1 \times n}$, it follows that
\begin{equation}
    A_1 = A_2 + \mathbf{1}c
\end{equation}
for some such~$c$. It follows from~\eqref{eq:shift_invariance} that for any such~$A_1,A_2$, $\gensn{A_1} = \gensn{A_2}$. Therefore, the function~$\restnorm{\cdot} : \R^{(n-1) \times (n-1)} \to \R$ given by
\begin{equation}
    \restnorm{\bar A} = \gensn{A},
\end{equation}
where~$A \in \eqrowsum^{n \times n}$ is any matrix satisfying~\eqref{eq:Aproj}, is well-defined. It is easy to verify that~$\restnorm{\cdot}$ is a norm over~$\R^{(n-1) \times (n-1)}$. It is also clear that given any~$A \in \eqrowsum^{n \times n}$ and the unique~$\bar A \in \R^{(n-1) \times (n-1)}$ for which~\eqref{eq:Aproj} holds,
\begin{equation}\label{eq:N_to_sm}
    \gensn{A} = \restnorm{\bar A}.
\end{equation}

Combining~\eqref{eq:N_to_sm} with the well-known equivalence of norms over finite-dimensional vector spaces (see for example~\cite[Theorem 5.4.4]{HornJohnson2012MatrixAnalysis}) immediately leads to the following result.
\begin{lemma}\label{lem:seminorm_equiv}
    Let~$\gensn{\cdot}_a, \gensn{\cdot}_b : \eqrowsum^{n \times n} \to R$ be two consensus seminorms. There exist positive constants~$c_m,c_M$ such that
    \begin{equation}
        c_m \gensn{M}_a \le \gensn{M}_b \le c_M \gensn{M}_a
    \end{equation}
    for any~$M \in \eqrowsum^{n \times n}$.
\end{lemma}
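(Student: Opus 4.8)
The plan is to reuse the identification set up in the paragraph preceding the lemma, which attaches to every consensus seminorm on~$\eqrowsum^{n\times n}$ an honest matrix norm on the finite-dimensional space~$\R^{(n-1)\times(n-1)}$, and then to transport the classical equivalence of norms on finite-dimensional spaces back through that identification. In other words, nothing new has to be proved analytically; the content of~\eqref{eq:N_to_sm} already does the work, and the lemma is obtained by applying a standard fact on the ``downstairs'' space and pulling the resulting inequalities back ``upstairs''.

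Concretely, I would first fix once and for all a single full-rank matrix~$Q \in \R^{(n-1)\times n}$ with~$\ker Q = \consubsp$, so that the assignment~$A \mapsto \bar A$ determined by~$QA = \bar A Q$ is the \emph{same} map when applied to both seminorms. Running the construction that produced~\eqref{eq:N_to_sm} on~$\gensn{\cdot}_a$ and on~$\gensn{\cdot}_b$ separately then yields two functions~$\|\cdot\|_a, \|\cdot\|_b : \R^{(n-1)\times(n-1)} \to \R$, each of which has already been verified to be a norm, such that~$\gensn{A}_a = \|\bar A\|_a$ and~$\gensn{A}_b = \|\bar A\|_b$ for every~$A \in \eqrowsum^{n\times n}$ and its associated~$\bar A$, and such that~$A \mapsto \bar A$ is surjective onto~$\R^{(n-1)\times(n-1)}$.

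Since~$\R^{(n-1)\times(n-1)}$ is a finite-dimensional real vector space, the cited equivalence-of-norms result (see for example~\cite[Theorem 5.4.4]{HornJohnson2012MatrixAnalysis}) furnishes positive constants~$c_m, c_M$ with~$c_m \|\bar B\|_a \le \|\bar B\|_b \le c_M \|\bar B\|_a$ for all~$\bar B \in \R^{(n-1)\times(n-1)}$. Given an arbitrary~$M \in \eqrowsum^{n\times n}$, I would set~$\bar B = \bar M$, its associated matrix, and invoke~$\gensn{M}_a = \|\bar M\|_a$ and~$\gensn{M}_b = \|\bar M\|_b$ to rewrite these inequalities as~$c_m \gensn{M}_a \le \gensn{M}_b \le c_M \gensn{M}_a$, which is exactly the assertion of the lemma.

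There is essentially no analytical obstacle; the whole argument is a translation. The only point that needs care is that the auxiliary matrix~$Q$ must be chosen before either seminorm is converted, so that~$\gensn{\cdot}_a$ and~$\gensn{\cdot}_b$ are compared through one and the same identification of~$\eqrowsum^{n\times n}$ with~$\R^{(n-1)\times(n-1)}$; with different choices of~$Q$ the two norms would live on spaces related only by a fixed invertible change of variables, which would still work but needlessly complicates the bookkeeping. The degenerate case~$n=1$ should also be noted: there every element of~$\eqrowsum^{1\times 1}$ has the form~$\mathbf{1}c$, so every consensus seminorm vanishes identically and the statement holds trivially with~$c_m = c_M = 1$.
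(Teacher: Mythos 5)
Your proposal is correct and follows essentially the same route as the paper: identify both consensus seminorms with norms on $\R^{(n-1)\times(n-1)}$ through the fixed matrix $Q$ and the relation~\eqref{eq:N_to_sm}, then pull back the standard finite-dimensional equivalence of norms. Your remarks about fixing a single $Q$ and the surjectivity of $A \mapsto \bar A$ are exactly the points the paper's construction relies on.
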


The next result proves that contraction in any submultiplicative consensus seminorm implies consensus and gives an explicit bound on the rate at which consensus is reached.

\begin{theorem}\label{thm:contract_gensn}
    Let~$\gensn{\cdot} : \eqrowsum^{n \times n} \to \R$ be a submultiplicative consensus seminorm, and let~$\mathcal{C}$ be a compact subset of matrices in~$\eqrowsum^{n \times n}$ which are all contractive in~$\gensn{\cdot}$ and whose row sums are all equal one. Let
    \begin{equation}\label{eq:gensn_lambda}
        \lambda \dfb \max_{\mathcal{C}} \gensn{M}
    \end{equation}
    Then for each infinite sequence of matrices~$M_1,M_2,\dots$ in~$\mathcal{C}$, the matrix product~$M_i M_{i-1} \cdots M_1$ converges as~$i \to \infty$ to a rank one matrix of the form~$\mathbf{1}c$ as fast as~$\lambda^i$ converges to zero.
\end{theorem}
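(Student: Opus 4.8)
The plan is to carry out the substantive part of the argument for the convenient consensus seminorm $\procsn{\cdot}_2$ (equivalently, to track the component of the partial products orthogonal to the consensus subspace), and to feed the hypothesis on the general seminorm $\gensn{\cdot}$ into this argument through Lemma~\ref{lem:seminorm_equiv}. Set $P_i\dfb M_iM_{i-1}\cdots M_1$, which again lies in $\eqrowsum^{n\times n}$ with all row sums equal to one. Submultiplicativity of $\gensn{\cdot}$ and \eqref{eq:gensn_lambda} give $\gensn{P_i}\le\lambda^i$. Writing $\Pi\dfb I-\frac1n\mathbf{1}\mathbf{1}'$ and applying Lemma~\ref{lem:seminorm_equiv} to $\gensn{\cdot}$ and the consensus seminorm $M\mapsto\|\Pi M\|_2$ (this is the metric consensus seminorm $\procsn{\cdot}_2$) produces a constant $c_M$, depending only on $\gensn{\cdot}$ and $n$, with $\|\Pi P_i\|_2=\procsn{P_i}_2\le c_M\gensn{P_i}\le c_M\lambda^i$. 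So the ``orthogonal part'' of $P_i$ already decays like $\lambda^i$, and it remains only to show that the ``average part'' converges.

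The second and main step is a telescoping estimate for the row averages. Decompose $P_i=\mathbf{1}c_i'+\Pi P_i$, where $c_i'\dfb\frac1n\mathbf{1}'P_i\in\R^{1\times n}$ is the average of the rows of $P_i$. Because every $M_k$ has all row sums equal to one, $M_{i+1}\mathbf{1}=\mathbf{1}$, so $P_{i+1}=M_{i+1}P_i=\mathbf{1}c_i'+M_{i+1}\Pi P_i$; comparing with $P_{i+1}=\mathbf{1}c_{i+1}'+\Pi P_{i+1}$ gives $\mathbf{1}(c_{i+1}'-c_i')=M_{i+1}\Pi P_i-\Pi P_{i+1}$. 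Taking spectral norms, using $\|\mathbf{1}w\|_2=\sqrt n\,\|w\|_2$ for a row vector $w$, and bounding $\|M_{i+1}\|_2\le\beta\dfb\max_{M\in\mathcal{C}}\|M\|_2$, which is finite by compactness of $\mathcal{C}$, I would obtain $\|c_{i+1}'-c_i'\|_2\le\frac{c_M(\beta+\lambda)}{\sqrt n}\lambda^i$. Summing this geometric series shows $(c_i')$ is Cauchy, hence converges to some $\bar c'\in\R^{1\times n}$ with $\|c_i'-\bar c'\|_2=O(\lambda^i)$; combined with $\|\Pi P_i\|_2\le c_M\lambda^i$ this yields $\|P_i-\mathbf{1}\bar c'\|_2=O(\lambda^i)$, which is the assertion, with limit matrix $\mathbf{1}\bar c'$ of the required form. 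In passing, shift invariance \eqref{eq:injection_invariance} also gives $\gensn{P_i-\mathbf{1}\bar c'}=\gensn{P_i}\le\lambda^i$, so the bound on the rate holds in $\gensn{\cdot}$ itself.

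The crux --- and the only genuine difference from the proof of Theorem~\ref{thm:induced_stochastic_contract} --- is the loss of stochasticity. In that proof one could bound $\|M_{i+j}M_{i+j-1}\cdots M_{i+1}\|_p$ over an entire block of factors by a single constant, because a product of stochastic matrices is stochastic and hence has $\infty$-norm $1$. Here a product of $j$ factors from $\mathcal{C}$ is only bounded by $\beta^j$, which is worthless as $j\to\infty$, so one is forced to estimate the increment $c_{i+1}'-c_i'$ \emph{one factor at a time}; this is why the argument is organized around the recursion for $c_i'$ rather than around a fixed horizon. Compactness of $\mathcal{C}$ is essential precisely at this point: without it one can make every $M_k$ contractive in $\gensn{\cdot}$ while $\|M_{i+1}\|_2$ grows fast enough that $\sum_i\|c_{i+1}'-c_i'\|_2$ diverges, so that $P_i$ need not converge even though $\gensn{P_i}\to0$.
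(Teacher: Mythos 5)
Your proposal is correct and follows essentially the same route as the paper's proof: split the partial products into a component along $\operatorname{span}\mathbf{1}$ and a complementary component, use Lemma~\ref{lem:seminorm_equiv} to transfer the bound $\gensn{P_i}\le\lambda^i$ to the auxiliary consensus seminorm $M\mapsto\|\Pi M\|_2$, and then show the consensus component converges via a one-step telescoping recursion that uses $M_{i+1}\mathbf{1}=\mathbf{1}$ and the compactness bound on $\|M_{i+1}\|_2$. The only differences are cosmetic: you argue at the matrix level with the specific orthogonal projection $\Pi=I-\tfrac1n\mathbf{1}\mathbf{1}'$, whereas the paper works with the vector sequences $x_i=M_i\cdots M_1 d$ and an arbitrary projection onto $\operatorname{span}\mathbf{1}$.
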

\vspace{0.25em}

Note that~$\lambda$ is well-defined as all seminorms over finite-dimensional spaces are continuous~\cite{Goldberg2017SeminormContinuous}. Also note that while it is obvious that the seminorm of the matrix product converges to zero, this alone does not imply that the matrix product converges to any finite limit.

\begin{proof}
    As in the proof of Theorem~\ref{thm:induced_stochastic_contract}, it suffices to show that given fixed nonzero vector~$d$, the sequence of vectors
    \begin{equation}
        x_i = M_i M_{i-1} \cdots M_1 d, \quad i \ge 1,
    \end{equation}
    converges to a point in the subspace spanned by the vector~$\mathbf{1}$.
    
    Let~$P \in \R^{n \times n}$ be any projection onto~$\consubsp$, i.e.~$P$ is any matrix such that~$P^2 = P$ and~$\image P = \consubsp = \operatorname{span} \mathbf{1}$. Clearly~$x_i = Px_i + (I - P)x_i$, so it is sufficient to show that both~$Px_i$ and~$(I-P)x_i$ converge to a finite limit as fast as~$\lambda^i$ converges to zero.
    
    We first show that~$(I-P)x_i$ converges to zero. It is easy to show that the function~$\gensn{M}_{I-P} \dfb \|(I-P)M\|_2$ is a seminorm. Furthermore, it is a \emph{consensus seminorm}, since~$(I-P)M=0$ if and only if all columns of~$M$ are in the kernel of~$I-P$, and~$\ker(I-P) = \image P = \operatorname{span}\mathbf{1}$. It then follows immediately from Lemma~\ref{lem:seminorm_equiv}, the submultiplicativity of~$\gensn{\cdot}$, and~\eqref{eq:gensn_lambda}, that
    \begin{align*}
        \|(I-P)M_i M_{i-1} \cdots M_1 \|_2 &\le c\gensn{M_i M_{i-1} \cdots M_1} \le c \lambda^i
    \end{align*}
    for some~$c \in \R$. Since
    \begin{align*}
        \|(I-P)x_i\|_2 &= \|(I-P)M_i M_{i-1} \cdots M_1 d\|_2 \\
            &\le \|(I-P)M_i M_{i-1} \cdots M_1\|_2 \|d\|_2,
    \end{align*}
    the sequence~$(I-P)x_i, \;i\ge1$ converges to zero as fast as~$\lambda^i$ converges to zero.
    
    We now turn to~$Px_i$. Since~$M_i \mathbf{1} = \mathbf{1}$ for all~$i$, it follows that~$M_i P = P$, and
    \begin{align*}
        Px_{i+1} &= P M_{i+1} x_i \\
            &= P M_{i+1} (P + I - P) x_i \\
            &= P x_i + P M_{i+1} (I - P) x_i.
    \end{align*}
    Then, for all~$i\ge1$,
    \begin{align*}
        Px_i &= Pd + \sum_{k=1}^{i} u_k,
    \end{align*}
    where~$u_k = P M_k (I-P)x_k$. It is enough to show that~$\sum_{k=1}^{i} u_k$ converges to a finite limit as fast as~$\lambda^i$ converges to zero. Since~$\mathcal{C}$ is compact and~$\|(I-P)x_i\|_2 \le c \lambda^i$, there is a positive constant~$\tilde c$ such that
    \begin{equation*}
        \|u_i\|_2 \le \tilde c \lambda^i, \quad i\ge1.
    \end{equation*}
    Then for all $i,j \ge 1$,
    \begin{align*}
        \left\| \sum_{k=1}^{i+j} u_k - \sum_{k=1}^{i} u_k \right\|_2 = \left\| \sum_{k=i+1}^{i+j} u_k \right\|_2 \le \tilde c \sum_{k=i+1}^{i+j} \lambda^k.
    \end{align*}
    Taking the sum of the geometric sequence on the right gives
    \begin{equation*}
        \left\| \sum_{k=1}^{i+j} u_k - \sum_{k=1}^{i} u_k \right\|_2 \le  c \lambda^{i + 1} \frac{1 - \lambda^{j}}{1-\lambda} < \frac{c}{1-\lambda},
    \end{equation*}
    and this proves that the sequence~$\sum_{k=1}^{i} u_k, \; i\ge 1$ is a Cauchy sequence and has a finite limit. Taking~$j \to \infty$ then proves that it also converges to its finite limit as fast as~$\lambda^i$ converges to zero.
\end{proof}

\section{Limitations of consensus seminorms}

This section focuses on the convergence rate of products of stochastic matrices. Given a compact subset~$\mathcal{S}$ of the set of stochastic matrices and a submultiplicative consensus seminorm~$\gensn{\cdot}$ such that~$\gensn{S} < 1$ for all~$S \in \mathcal{S}$, Theorem~\ref{thm:contract_gensn} shows that every product of infinitely many matrices from~$\mathcal{S}$ converges exponentially fast to a rank one matrix, and also provides a bound on the rate of convergence. It is therefore useful to find classes of contracting stochastic matrices, where contraction is with respect to some submultiplicative consensus seminorm. This is related to the well-studied question of determining when a class of stochastic matrices is such that every infinite product of matrices from the class converges to a rank one matrix~\cite{HajnalBartlett1958}.

We call a compact set~$\mathcal{S}$ of stochastic matrices an \emph{agreement set} if every infinite product of matrices from~$\mathcal{S}$ converges to a rank one matrix. There are several well-known classes of stochastic matrices every compact subset of which is an agreement set. These include the class of stochastic matrices with at least one column all of whose entries are positive (also known as matrices with a strongly rooted graph), the class of scrambling matrices, and the class of stochastic matrices with positive diagonals and a rooted graph~\cite{Ming2008ConsensusGraphical}. Another such class which includes all previous classes as a special case is the class of Sarymsakov matrices~\cite{semigroup} and its generalization~\cite{Xia2019GenSary}. Note that every stochastic matrix with a positive column has metric consensus seminorm~$\procsn{\cdot}_\infty$ less than one, and every scrambling matrix has induced consensus seminorm~$\indsn{\cdot}_\infty$ less than one, so Theorem~\ref{thm:contract_gensn} proves that every compact subset of either of these two classes is an agreement set, and tells us how to find a bound on the rate of convergence.

It is therefore perhaps natural to expect that there is some submultiplicative consensus seminorm whose value on any Sarymsakov matrix is less than one. Unfortunately, this is false, as shown by the next result. To be exact, the following theorem shows that there is no class of stochastic matrices larger than the class of scrambling matrices, in the sense of the partial order induced by set inclusion, where all matrices in the class are contracting with respect to a single submultiplcative consensus seminorm.

\begin{theorem}\label{thm:no_coe_gen}
    Let~$\gensn{\cdot}$ be an arbitrary submultiplicative consensus seminorm. If there is a stochastic non-scrambling matrix~$S_1$ such that~$\gensn{S_1} < 1$, then there exists a scrambling matrix~$S_2$ such that~$\gensn{S_2} > 1$.
\end{theorem}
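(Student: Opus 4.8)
The plan is to prove the contrapositive: assuming every scrambling matrix $T$ satisfies $\gensn{T}\le 1$, I will show that every non-scrambling stochastic matrix $S$ satisfies $\gensn{S}\ge 1$. The strict inequality in the statement then follows, since violating ``$\gensn{T}\le 1$ for all scrambling $T$'' means precisely that some scrambling matrix has seminorm exceeding $1$.

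First I would strengthen the hypothesis to the statement that \emph{every} stochastic matrix has seminorm at most $1$. Indeed, if $T$ is scrambling and $M$ is stochastic, then for each $t\in[0,1)$ the convex combination $tM+(1-t)T$ is stochastic and its support contains that of $T$, hence is itself scrambling (adding positive entries cannot destroy scrambling); therefore $\gensn{tM+(1-t)T}\le 1$, and letting $t\to 1^-$ and invoking the continuity of $\gensn{\cdot}$ yields $\gensn{M}\le 1$.

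Next, fix a non-scrambling stochastic $S$ and choose rows $p,q$ with disjoint supports, so that the support of row $p$ is a proper nonempty subset of $\mathbf{n}$. Let $x^*\in\{-1,1\}^n$ have $x^*_j=1$ when $s_{pj}\neq 0$ and $x^*_j=-1$ otherwise; then $(Sx^*)_p=1$ and $(Sx^*)_q=-1$, so in particular $Sx^*\notin\operatorname{span}\mathbf{1}$. Introduce the $0$--$1$ stochastic matrix $N^*$ whose $i$th row is $e_p'$ if $x^*_i=1$ and $e_q'$ if $x^*_i=-1$ (equivalently $N^*=\tfrac12\mathbf{1}(e_p+e_q)'+\tfrac12 x^*(e_p-e_q)'$). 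The key computation is that $N^*(Sx^*)=x^*$: each entry of $N^*(Sx^*)$ equals $(Sx^*)_p=1$ or $(Sx^*)_q=-1$ according as the corresponding entry of $x^*$ is $+1$ or $-1$. Hence the stochastic matrix $P:=N^*S$ fixes the vector $x^*$, i.e.\ $Px^*=x^*$ with $x^*\notin\operatorname{span}\mathbf{1}$. Such a $P$ cannot be contractive: if $\gensn{P}<1$ then Theorem~\ref{thm:contract_gensn} applied to $\mathcal{C}=\{P\}$ forces $P^k\to\mathbf{1}c'$, whence $x^*=P^kx^*\to(c'x^*)\mathbf{1}$, a contradiction; alternatively $\bar P$ has eigenvalue $1$ (since $\bar P(Qx^*)=Q(Px^*)=Qx^*\neq 0$), so $\gensn{P}=\restnorm{\bar P}\ge\rho(\bar P)\ge 1$. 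Finally, submultiplicativity together with the strengthened hypothesis gives $\gensn{P}=\gensn{N^*S}\le\gensn{N^*}\,\gensn{S}\le\gensn{S}$, so $\gensn{S}\ge 1$, as desired.

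The one real subtlety is the appearance of a \emph{lower} bound on $\gensn{S}$: submultiplicativity and the equivalence of consensus seminorms (Lemma~\ref{lem:seminorm_equiv}) only ever supply upper bounds on seminorms of products and non-sharp lower bounds, so the bound $\gensn{S}\ge 1$ must be produced some other way. The device above does this by left-multiplying $S$ by a carefully chosen $0$--$1$ stochastic matrix $N^*$ so that the product $N^*S$ fixes a nonconstant vector, and it is precisely the disjoint-support property of rows $p$ and $q$ that makes $x^*$ a genuine two-valued sign vector with $(Sx^*)_p=1$, $(Sx^*)_q=-1$, which is what makes the construction go through. Checking that $N^*$ is stochastic and that $N^*(Sx^*)=x^*$ is the heart of the argument.
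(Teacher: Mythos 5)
Your proof is correct, but it follows a genuinely different route from the paper. The paper argues directly: given a non-scrambling stochastic $S_1$ with $\gensn{S_1}<1$, it invokes Hajnal's theorem (\cite{HajnalBartlett1958}) to obtain a stochastic $S_2$ such that $(S_1S_2)^i$ does not converge to a rank-one matrix, concludes $\gensn{S_1S_2}\ge 1$ from the contraction theorem, hence $\gensn{S_2}\ge 1/\gensn{S_1}>1$ by submultiplicativity, and finally perturbs $S_2$ to a strictly positive (hence scrambling) matrix with seminorm still exceeding $1$, using continuity of the seminorm. You instead prove the contrapositive and replace the citation of Hajnal's theorem by an explicit construction: for a non-scrambling $S$ you build the $\pm1$ vector $x^*$ from the disjoint supports of rows $p,q$ and the $0$--$1$ stochastic left multiplier $N^*$ so that $N^*Sx^*=x^*$ with $x^*\notin\operatorname{span}\mathbf{1}$, whence $N^*S$ cannot be contractive (either by Theorem~\ref{thm:contract_gensn} applied to the singleton $\{N^*S\}$, or via the spectral radius of the reduced matrix from the construction behind Lemma~\ref{lem:seminorm_equiv}); combined with your preliminary step upgrading ``all scrambling matrices have seminorm $\le 1$'' to ``all stochastic matrices have seminorm $\le 1$'' (convex combination with a scrambling matrix plus continuity), submultiplicativity yields $\gensn{S}\ge\gensn{N^*S}\ge 1$. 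Note that where the paper uses a density/continuity perturbation at the end of its argument, you use an analogous limiting argument at the beginning, and where the paper multiplies on the right by Hajnal's partner matrix, you multiply on the left by your explicit $N^*$; your version is self-contained (it effectively re-proves the needed special case of Hajnal's result and exhibits an explicit witness), at the cost of being somewhat longer than the paper's citation-based argument. All the ingredients you use (continuity of seminorms on $\eqrowsum^{n\times n}$, submultiplicativity, and Theorem~\ref{thm:contract_gensn} for a singleton set of stochastic matrices) are legitimately available, so there is no gap.
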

\begin{proof}
    Suppose that~$S_1$ is stochastic non-scrambling matrix such that~$\gensn{S_1} < 1$. Since it is not a scrambling matrix, by~\cite[Theorem 1]{HajnalBartlett1958} there exists a stochastic matrix~$S_2$ such that~$S_1 S_2$ is not an agreement matrix, i.e.,~$(S_1 S_2)^i$ does not converge to a rank one matrix as~$i \to \infty$. It therefore follows from Theorem~\ref{thm:induced_stochastic_contract} that~$\gensn{S_1 S_2} \ge 1$ and by sub-multiplicativity that~$\gensn{S_1} \gensn{S_2} \ge 1$, so
    \begin{equation*}
        \gensn{S_2} \ge \frac{1}{\gensn{S_1}} > 1.
    \end{equation*}
    Since seminorms are continuous and since for any~$\varepsilon > 0$ there exists a stochastic matrix~$S_3$ with all entries positive entries (ensuring thereby it is a scrambling matrix) and such that~$\|S_3 - S_2\|_\infty < \varepsilon$, we conclude that there exists a scrambling matrix~$S_3$ such that~$\gensn{S_3} > 1$.
\end{proof}

\section{CONCLUSIONS}

In this paper we studied the role that seminorms can play in computing bounds on the rate of reaching consensus, and showed that, for stochastic matrices in particular, seminorms cannot be used to bound the rate of convergence even for some classes of matrices where convergence is known to occur. However, this does not mean that seminorms are not useful in this context. Indeed, stochastic matrices with positive diagonals appear in many interesting applications, and since not all scrambling matrices have positive diagonals, Theorem~\ref{thm:no_coe_gen} still allows the existence of a seminorm which is less than one for all stochastic matrices with positive diagonal and a rooted graph. Also, this leaves open the question of what are ``seminorm-like'' functions that can be used to derive bounds on the rate of convergence.







\bibliographystyle{IEEEtranS}
\bibliography{literature}

\end{document}